\newcommand{\w}{\mathbf{w}}
\newcommand{\head}[1]{\vspace{1.7mm}\noindent{{\bf #1.}}}
\newtheorem{dfn}{Definition}
\newtheorem{prop}{Proposition}
\newcommand*\circled[1]{\tikz[baseline=(char.base)]{
            \node[shape=circle,draw,inner sep=0.3pt] (char) {#1};}}
\begin{document}
\mainmatter              
\title{Generalized Densest Subgraph in Multiplex Networks}
\titlerunning{Generalized Densest Subgraph in Multiplex Networks}  
%
\author{Ali Behrouz\inst{1} \and Farnoosh Hashemi\inst{1}}
\authorrunning{Ali Behrouz and Farnoosh Hashemi} 

\institute{Cornell University, Ithaca, NY, USA\\
\email{\{ab2947, Sh2574\}@cornell.edu},}

\maketitle              

\vspace{
-3ex
}
\begin{abstract}
Finding dense subgraphs of a large network is a fundamental problem in graph mining that has been studied extensively both for its theoretical richness and its many practical applications over the last five decades. However, most existing studies have focused on graphs with a single type of connection. In applications such as biological, social, and transportation networks, interactions between objects span multiple aspects, yielding multiplex graphs. Existing dense subgraph mining methods in multiplex graphs consider the same importance for different types of connections, while in real-world applications, one relation type can be noisy, insignificant, or irrelevant. Moreover, they are limited to the edge-density measure, unable to change the emphasis on larger/smaller degrees depending on the application. To this end, we define a new family of dense subgraph objectives, parametrized by two variables $p$ and $\beta$, that can (1) consider different importance weights for each relation type, and (2) change the emphasis on the larger/smaller degrees, depending on the application. Due to the NP-hardness of this problem, we first extend the FirmCore, $k$-core counterpart in multiplex graphs, to layer-weighted multiplex graphs, and based on it, we propose two polynomial-time approximation algorithms for the generalized densest subgraph problem, when $p \geq 1$ and the general case. Our experimental results show the importance of considering different weights for different relation types and the effectiveness and efficiency of our algorithms.

\keywords{Multiplex Networks, Dense Subgraphs, FirmCore, $p$-mean}
\end{abstract}

\section{Introduction}\label{sec:introduction}
Multiplex (ML) networks~\citep{main-ML} have become popular in various applications involving complex networks such as social, transportation, and biological networks. These networks involve interactions between objects that span different aspects. For instance, interactions between individuals can be categorized as social, family, or professional, and professional interactions can vary depending on the topic. ML networks allow nodes to have interactions in multiple relation types and represent the graph of each relation type as a layer in the network.

\noindent
Detecting Dense structures in a graph has become a key graph mining primitive with a wide range of applications~\cite{bio-dense, finance-dense, web-dense}. The common method for identifying dense subgraphs is to formulate an objective function (called density) that captures the density of each node set within a graph and then solve it via combinatorial optimization methods~\cite{densest_first, f-density, flow-densest}. While the problem of finding the densest subgraph in simple graphs is a well-studied problem in the literature and its recent advancements bring the problem close to being fully resolved~\cite{densest-survey}, extracting dense subgraphs from ML networks recently attracts attention~\cite{MLcore, FirmCore, FirmTruss}. Due to the complex interactions between nodes in ML networks, the definition of edge density is challenging. To this end, several studies~\cite{MLcore, densest-common-subgraph, FirmCore} introduced new density objective functions to capture complex dense subgraphs; however, in practice, it can be challenging to evaluate tradeoffs between density measures and decide which density is more useful. Accordingly, there is a lack of a unified framework that can generalize all the existing density measures to formalize the tradeoff between them.

\noindent
One of the main advantages of ML networks is their ability to provide complementary information by different relation types~\cite{FirmCore}. That is, some dense subgraphs can be missed if we only look at one relation type or the aggregated network~\cite{FirmCore}. However, taking advantage of this complementary information is challenging as in real-world applications, different relation types have different importance (e.g., some layers might be noisy/insignificant~\cite{MLcore, FirmCore, anomuly, CS-MLGCN}, or have different roles in the applications~\cite{flight, admire++, admire}). Existing dense subgraph models treat relation types equally, which means noisy/insignificant layers (or less important layers) are considered as important as other layers, causing suboptimal performance and missing some dense subgraphs (we support this claim in \S~\ref{sec:experiments}).

\noindent
To overcome the above challenges, we introduce a new family of density objectives in ML networks, $p$-mean multiplex densest subgraph ($p$-mean MDS), that: \circled{1} is able to handle different weights for layers, addressing different importance of relation types; \circled{2} given a parameter $p$, inspired by~\citet{p-mean}, it uses $p$-mean of node degrees in different layers. This design gives us the flexibility to emphasize smaller/larger degrees and allows us to uncover a hierarchy of dense subgraphs in the same ML graph; \circled{3} unifies the \emph{existing} definition of density in ML networks, which allows evaluating the tradeoffs between them. The multiplex $p$-mean density objective uses parameter $\beta$ to model the trade-off between high density and the cumulative importance of layers exhibiting the high density, and uses parameter $p$ to define $p$-mean of node degrees within a subgraph as a measure of high density (we formally define it in \S~\ref{sec:method}). Inspired by FirmCore strcture~\cite{FirmCore}, we further extend the concept of $k$-core to weighted layer ML networks and define weighted $(k. \lambda)$-FirmCore ($(k. \lambda)$-GFirmCore) as a maximal subgraph in which every node is connected to at least $k$ other nodes within that subgraph, in a set of layers with cumulative importance of at least $\lambda$. We discuss that given $\lambda$, weighted FirmCore has linear time decomposition in terms of the graph size, and can provide two tight approximation algorithms for the two cases of the $p$-mean MDS problem when \circled{i} $p \geq 1$ and \circled{ii} the general case.

\section{Related Work and Background}
Given the wide variety of applications for dense subgraph discovery~\cite{bio-dense, web-dense, finance-dense}, several variants of the densest subgraph problem with different objective functions have been designed~\cite{f-density, densest_first, p-mean, p-mean2}. Recently, \citet{p-mean} unifies most existing density objective functions and suggests using $p$-mean of node degrees within the subgraph as its density. In this case, when $p=1$, $p=-\infty$, and $p = 2$ we have the traditional densest subgraph problem, maximal $k$-core, and F-density~\cite{f-density}, respectively. Despite the usefulness of the family of $p$-mean density objectives, they are limited to simple graphs and their extension to ML networks is not straightforward.

\noindent
In ML networks, \citet{densest-common-subgraph} formulate the densest common subgraph problem and develop a linear-programming formulation. \citet{azimi-etal} propose a new definition of core, \textbf{k}-core, over ML graphs. \citet{MLcore} propose algorithms to find all possible \textbf{k}-cores, and generalized the formulation of \citet{densest-common-subgraph} by defining the density of a subgraph in ML networks as a real-valued function $\rho: 2^V \rightarrow \mathbb{R}^+$: 
\begin{equation}
    \rho(S) = \max_{\hat{L} \subseteq L} \min_{\ell \in \hat{L}} \frac{|E_\ell[S]|}{|S|} |\hat{L}|^\beta,
\end{equation}
\noindent
where $E_\ell[S]$ is the number of internal edges of $S$ in layer $\ell$, and $\beta \geq 0$ is a real number. They further propose a core-based $\frac{1}{2|L|^{\beta}}$-approximation algorithm. However, their algorithm takes exponential time in the number of layers, rendering it impractical for large networks (see\S~\ref{sec:experiments}). Recently, \citet{FirmCore} introduce FirmCore, a new family of dense subgraphs in ML network, as a maximal subgraph in which every node is connected to at least $k$ other nodes within that subgraph, in each of at least $\lambda$  individual layers.

Although the densest FirmCore approximates function $\rho(.)$, which its optimization is NP-hard~\cite{ml-core-journal},~with provable guarantee, it is limited to unweighted layer ML networks, missing some dense structures. Moreover, its approximation guarantee is limited to the objective function defined by \citet{MLcore},~and its performance~in~our~$p$-mean~MDS~is unexplored. For~additional~related work on the densest subgraph problem, we refer to the recent survey by \citet{densest-survey}.

\section{$p$-mean Multiplex Densest Subgraph}\label{sec:method}
We let $G = (V, E, L, \w)$ denote an ML graph, where $V$ is the set of nodes, $L$ is the set of layers, $E \subseteq V \times V \times L$ is the set of edges, and $\w(.): L \rightarrow \mathbb{R}^{\geq 0}$ is a function that assigns a weight to each layer. The set of neighbors of node $v \in V$ in layer $\ell \in L$ is denoted $N_\ell(v)$ and the degree of $v$ in layer $\ell$ is $\text{deg}_\ell (v) = |N_\ell(v)|$. For a set of nodes $H \subseteq V$, $G_\ell[H] = (H, E_\ell[H])$ shows the subgraph of $G$ induced by $H$ in layer $\ell$, and $\text{deg}^H_{\ell}(v)$ is the degree of $v$ in this subgraph. We sometimes use $G_\ell[V]$ and $E_\ell[V]$ as $G_\ell$ and $E_\ell$, respectively. 

\noindent
As discussed in \cite{MLcore}, the density in ML networks should be modeled as a trade-off between the high density and the number of layers exhibiting the high density. Here, we use this intuition and first use $p$-mean density to measure the density of the subgraph in each layer, i.e., 
\begin{equation}
    \Omega_\ell(S) = \left( \frac{1}{|S|} \sum_{u \in S} deg_\ell(u)^p \right)^{1/p},
\end{equation}
and then multiply it by the importance of the layer exhibiting this density:
\begin{equation}
     \Xi_\ell(S) = \Omega_\ell(S) \w(\ell).
\end{equation}
\noindent
Based on this definition of density we define the $p$-mean MDS problem as follows:

\begin{problem} [$p$-mean Multiplex Densest Subgraph]\label{prob:MDS}
Given an ML graph $G=(V,E,L, \w)$, real numbers $\beta \geq 0$ and $p \in \mathbb{R} \cup \{ +\infty, -\infty \}$, and a real-valued function $\rho : 2^V \rightarrow \mathbb{R}^+$ defined~as:
\begin{equation} \label{eq1}
\rho(S) = \max_{\hat{L} \subseteq L} \min_{\ell \in \hat{L}}   \Xi_\ell(S)
\left(\sum_{\ell' \in \hat{L}} \w(\ell')\right)^\beta,
\end{equation}
find a subset of vertices $S^* \subseteq V$ that maximizes $\rho$ function.
\end{problem}
\noindent
Note that given layer weights $\w(\ell)$, we aim to solve a max-min problem over~$\Xi_\ell(S)$. Also, given a layer $\ell$, maximizing the $\Xi_\ell(S)$ is equivalent to maximizing $\Omega_\ell(S)^p$~for $p > 0$ and minimizing $\Omega_\ell(S)^p$ for $p < 0$. Therefore, for the sake of simplicity,~in the following we aim to optimize (maximize or minimize) $\Omega_\ell(S)^p$. 
Following, we use $\Delta_\ell(S/\{u\}) = \Omega_\ell(S)^p - \Omega_\ell(S/\{u\})^p$, to denote the difference that~removing~a node $u$ can cause to the density of layer $\ell$. When $p = 1$ and $\w(.) = 1$, the $p$-mean MDS problem reduces to ML densest subgrapah problem~\cite{MLcore}.

\subsection{Generalized FirmCore Decomposition}
Next, inspired by the success FirmCore~\cite{FirmCore} in approximating the ML densest subgraph problem, we generalized it to layer-weighted ML networks and design an algorithm to find all existing FirmCores. In \S~\ref{sec:approx}, we use the generalized FirmCore to approximate Problem~\ref{prob:MDS}. 

There are two steps to generalize this concept: \circled{1} FirmCore treats all layers the same and consider the number of selected layers, accordingly. However, generalized FirmCore needs to consider the cumulative importance of selected layers, to take advantage of layer weights. \circled{2} In simple densest subgraph problem (i.e., $p = 1$), each node in a subgraph contributes the same to the denominator of the density function (i.e., subgraph size $|S|$), while each node's contribution to the numerator (i.e., number of edges) is as much as its degree. Traditionally, core structures attracts attention to approximate the densest subgraph as they provide lower bound for the minimum degree. However, in the $p$-mean density, the contribution of each node does not equal to its degree. As we discussed above, removing each node makes  $\Delta_\ell(S/\{u\}) = \Omega_\ell(S)^p - \Omega_\ell(S/\{u\})^p$ difference to the numerator of the $\Omega_\ell^p(S)$. Accordingly, in the general case $p \in \mathbb{R} \cup \{-\infty, \infty \}$, we want our generalized FirmCore to provide lower bound for the $\Delta_\ell(S/\{u\})$.

\begin{dfn}[Generalized FirmCore]\label{FirmCore}
Given an ML graph $G$, a non-negative \underline{real-value} threshold $\lambda$, an integer $k \geq 0$, and $p \in \mathbb{R} \cup \{ -\infty, +\infty \}$, the $(k, \lambda, p)$-GFirmCore of $G$ is a maximal subgraph $H = G[C_{k}] = (C_{k}, E[C_{k}], L)$ such that for each node $v\in C_k$ there are some layers with cumulative importance of at least $\lambda$ (i.e., $ \exists \{\ell_1, ..., \ell_s\} \subseteq L$ with $\sum_{i = 1}^s \w(\ell_i) \geq \lambda$) such that $\Delta_\ell(S/\{u\}) \geq k$, for $1\leq i\leq s$.
\end{dfn}

\begin{prop}
    When $p = 1$ and $\w(\ell_i) = 1$ for all $ \ell_i \in L$, $(k, \lambda, p)$-GFirmCore is equivalent to the $(k, \lambda)$-FirmCore~\cite{FirmCore}. 
\end{prop}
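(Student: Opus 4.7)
The plan is to show directly that, under the substitutions $p=1$ and $\w(\ell_i)\equiv 1$, the defining conditions of the $(k, \lambda, p)$-GFirmCore reduce to those of the $(k, \lambda)$-FirmCore from \cite{FirmCore}, after which the maximality requirement in both definitions forces the induced subgraphs to coincide.

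First, I would handle the outer ``cumulative importance'' clause. With all layer weights equal to one, $\sum_{i=1}^{s}\w(\ell_i) = s$, so the existential condition ``$\exists\{\ell_1,\dots,\ell_s\}\subseteq L$ with $\sum_{i}\w(\ell_i)\ge\lambda$'' collapses to ``there exist at least $\lambda$ layers'' (using that the FirmCore's threshold $\lambda$ is integer-valued in the original definition). This already lines up with the outer clause of the $(k, \lambda)$-FirmCore definition.

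Second, I would simplify the inner per-layer quantity $\Delta_\ell(S/\{u\}) = \Omega_\ell(S)^p - \Omega_\ell(S/\{u\})^p$ at $p=1$. At $p=1$, $\Omega_\ell(S)^p$ is the arithmetic mean of the in-subgraph degrees in layer $\ell$, and a short telescoping calculation (combined with the $\w(\ell)=1$ convention absorbing the multiplicative factor appearing in $\Xi_\ell$) identifies the marginal contribution of $u$ with $\deg_\ell^{S}(u)$. Hence the GFirmCore clause $\Delta_\ell(S/\{u\})\ge k$ becomes $\deg_\ell^{S}(u)\ge k$, which is exactly the inner clause of the FirmCore definition.

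Finally, since both structures are defined as the \emph{maximal} induced subgraph whose vertex set satisfies the respective pointwise conditions at every node, and since those conditions have now been matched, the two maximal vertex sets coincide and so do the induced subgraphs. The main obstacle I anticipate is the second step: carefully computing $\Delta_\ell$ at $p=1$ and confirming, with the $1/|S|$ normalization in $\Omega_\ell(S)$ and the $\w(\ell)$ factor in $\Xi_\ell$ both set consistently, that the telescoping reduces cleanly to the in-subgraph degree rather than to a more complicated expression; this is purely a bookkeeping verification but is where any subtlety in the reduction lives.
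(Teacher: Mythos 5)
The paper states this proposition without any proof, so your argument has to stand on its own; your overall strategy (reduce each clause of the GFirmCore definition under the substitutions, then invoke maximality) is the natural and correct one, and your first step — collapsing the cumulative-importance condition to ``at least $\lambda$ layers'' when $\w \equiv 1$ — is fine. The gap is exactly where you suspected it would be. The claim that $\Delta_\ell(S/\{u\})$ telescopes to $\deg_\ell^S(u)$ at $p=1$ is false under the paper's literal definition $\Delta_\ell(S/\{u\}) = \Omega_\ell(S)^p - \Omega_\ell(S/\{u\})^p$ with $\Omega_\ell(S)^1 = \frac{1}{|S|}\sum_{v\in S}\deg_\ell(v)$. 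Writing $m = |E_\ell[S]|$ and $d_u = \deg_\ell^S(u)$, one has $\Omega_\ell(S)^1 = 2m/|S|$ and $\Omega_\ell(S\setminus\{u\})^1 = 2(m-d_u)/(|S|-1)$, so the difference is $\frac{2}{|S|-1}\bigl(d_u - \frac{m}{|S|}\bigr)$ --- the (scaled) gap between $u$'s degree and the average degree, not the degree itself. This quantity can be negative and is $O(d_u/|S|)$ in magnitude, so ``$\Delta_\ell(S/\{u\}) \geq k$'' is a very different condition from ``$\deg_\ell^S(u) \geq k$'', and the reduction to the FirmCore clause does not go through as you wrote it.

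To make the proposition true you must read $\Delta_\ell$ the way the paper's surrounding prose (not its displayed formula) intends: as the change in the \emph{numerator} $\sum_{v\in S}\deg_\ell(v)^p$ caused by deleting $u$, without the $1/|S|$ normalization. At $p=1$ that change is $2\deg_\ell^S(u)$ (each incident edge is counted once at $u$ and once at the neighbor whose degree drops), or exactly $\deg_\ell^S(u)$ if one instead tracks the edge count $|E_\ell[S]|$; either way it is a monotone function of the in-subgraph degree and the inner clause matches FirmCore's up to an absorbed constant. Your proof needs to state explicitly which convention it adopts and acknowledge that the literal $\Omega$-difference does not reduce to the degree; as written, the ``purely bookkeeping'' verification you deferred is precisely the step that breaks.
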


\begin{prop}[Hierarchical Structure]\label{prop:FirmCore-hierarchical}
Given a real-value threshold $\lambda$, an integer $k \geq 0$, and $p \in \mathbb{R} \cup \{-\infty, \infty \}$ the $(k + 1, \lambda, p)$-GFirmCore and $(k, \lambda + \epsilon, p)$-GFirmCore of~$G$ are subgraphs of its $(k, \lambda, p)$-GFirmCore for any $\epsilon \in \mathbb{R}^{+}$.
\end{prop}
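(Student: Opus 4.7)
The plan is to prove both containments by a simple monotonicity argument on the two parameters in Definition 1, mirroring the standard hierarchical argument for classical $k$-cores.

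For the first inclusion, let $H_1 = G[C_{k+1}]$ denote the $(k+1,\lambda,p)$-GFirmCore. By definition, for every $v \in C_{k+1}$ there is a set of layers $\{\ell_1,\dots,\ell_s\} \subseteq L$ with $\sum_i \w(\ell_i) \geq \lambda$ such that $\Delta_{\ell_i}(C_{k+1}/\{v\}) \geq k+1$ for all $i$. Since $k+1 \geq k$, the very same witness layer set certifies that $\Delta_{\ell_i}(C_{k+1}/\{v\}) \geq k$, so $H_1$ itself satisfies the $(k,\lambda,p)$-GFirmCore defining condition as a subgraph. By the maximality clause in Definition 1, $H_1$ must be contained in the $(k,\lambda,p)$-GFirmCore $G[C_k]$.

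The second inclusion is analogous: let $H_2 = G[C'_k]$ be the $(k,\lambda+\epsilon,p)$-GFirmCore. For each $v \in C'_k$ pick the witness layers $\{\ell_1,\dots,\ell_s\}$ with cumulative weight $\sum_i \w(\ell_i) \geq \lambda + \epsilon$ on which $\Delta_{\ell_i}(C'_k/\{v\}) \geq k$. Because $\lambda + \epsilon \geq \lambda$ for every $\epsilon > 0$, this same family of layers also satisfies the weaker cumulative-weight requirement $\sum_i \w(\ell_i) \geq \lambda$, so $H_2$ fulfills the $(k,\lambda,p)$-GFirmCore property. Maximality again gives $H_2 \subseteq G[C_k]$.

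The only genuine subtlety is that $\Delta_\ell(\cdot/\{v\})$ is evaluated on the subgraph under consideration, not on $G$ itself, so one might worry that ``transferring'' the condition to a larger ambient subgraph is nontrivial. This is not actually needed here, however: in both inclusions we verify the weaker condition on the stronger GFirmCore as a stand-alone subgraph, and then appeal to the maximality in Definition 1 to embed it in $G[C_k]$. The main obstacle, to the extent there is one, is simply being explicit that ``maximal'' in Definition 1 refers to the unique maximal such subgraph (so that any smaller qualifying subgraph is forced to be contained in it); this can be justified once, in passing, by the standard observation that the union of two subgraphs each satisfying the defining property again satisfies it, because adding vertices to a subgraph cannot decrease the internal degrees that govern $\Delta_\ell$.
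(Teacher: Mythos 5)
The paper states this proposition without any proof, so there is nothing to compare against directly; your two parameter-monotonicity steps --- relaxing $k+1$ to $k$, and relaxing the cumulative-weight threshold $\lambda+\epsilon$ to $\lambda$ --- are correct and are surely what the authors intend. The genuine problem sits exactly in the step you flag and then dismiss in one line. You justify uniqueness of the maximal qualifying subgraph (which is what lets ``maximality'' yield containment) by arguing that the union of two qualifying subgraphs qualifies ``because adding vertices to a subgraph cannot decrease the internal degrees that govern $\Delta_\ell$.'' Internal degrees are indeed monotone under adding vertices, but the quantity in Definition~\ref{FirmCore} is not the degree: it is $\Delta_\ell(S/\{u\}) = \Omega_\ell(S)^p - \Omega_\ell(S\setminus\{u\})^p$, and this is \emph{not} monotone in $S$. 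Already for $p=1$ one computes $\Delta_\ell(S/\{u\}) = \frac{2\left(|S|\,\text{deg}^S_\ell(u) - |E_\ell[S]|\right)}{|S|(|S|-1)}$; taking $S_1=\{u,a\}$ with the single edge $ua$ gives $\Delta_\ell(S_1/\{u\})=1$, while enlarging to $S=\{u,a,b,c\}$ with $a,b,c$ a triangle and $u$ still adjacent only to $a$ gives $\Delta_\ell(S/\{u\})=0$. Adding vertices strictly decreased $u$'s marginal contribution.

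Consequently the closure-under-union argument, and with it the reading of maximality as ``contains every qualifying subgraph,'' does not follow from the reason you give; without it a maximal qualifying subgraph need not be unique and neither containment is forced. In fairness, this is as much a defect of Definition~\ref{FirmCore} as of your proof --- the paper never establishes that ``the'' $(k,\lambda,p)$-GFirmCore is well defined, and your argument would go through verbatim if the defining condition were stated in terms of $\text{deg}^S_\ell(u)$ (as in the original FirmCore) rather than $\Delta_\ell(S/\{u\})$. But as written, the load-bearing step of your proof rests on a monotonicity claim that is false for the quantity actually appearing in the definition, so you should either prove closure under union for $\Delta_\ell$ under whatever restrictions on $p$ make it true, or restate the proposition at the level of the defining condition rather than of ``the'' maximal subgraph.
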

\noindent
From now, to avoid confusion, when we refer to $(k, \lambda)$-GFirmCore, we assume that $\lambda$ is maximal. That is, for at least one vertex $u$ in $(k, \lambda)$-GFirmCore, there is a subset of layers with an exact summation of $\lambda$ in which $u$ has a degree not less than $k$. Next, we show that GFirmCore decomposition is strictly harder then the FirmCore decomposition, which is solvable in polynomial time, unless $P = NP$.

\begin{theorem}
    GFirmCore decomposition, which is finding all possible GFirmCores in an ML network, is NP-hard. 
\end{theorem}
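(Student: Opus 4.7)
\noindent
The plan is a polynomial-time Karp reduction from Subset Sum to the natural decision version of GFirmCore decomposition: given $G$, $k$, $p$, and $\lambda$, does $G$ admit a non-empty $(k, \lambda, p)$-GFirmCore under the maximal-$\lambda$ convention stated after Proposition~\ref{prop:FirmCore-hierarchical}? Any polynomial-time decomposition algorithm would answer this by scanning the produced list of cores, so hardness of the decision problem implies hardness of decomposition.

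Fix $p = 2$. Given a Subset Sum instance $(a_1, \dots, a_n; t)$ with positive rationals, I construct $G$ as follows. Let $L = \{\ell_1, \dots, \ell_n\}$ with $\w(\ell_i) = a_i$, and let $V$ be the vertex set of a single clique on $m$ vertices with every clique edge present in every layer. A direct calculation using $\Omega_\ell(S)^2 = \frac{1}{|S|}\sum_{u \in S} \deg_\ell(u)^2$ shows that for every $u \in V$ and every $\ell \in L$,
\begin{equation}
\Delta_\ell(V \setminus \{u\}) \;=\; (m-1)^2 - (m-2)^2 \;=\; 2m - 3.
\end{equation}
Choosing $m = \lceil (k+3)/2 \rceil$ makes $\Delta_\ell \ge k$ uniformly, so every layer satisfies the degree condition of Definition~\ref{FirmCore} for every vertex.

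Because the gadget is completely symmetric in both vertices and layers, the only remaining requirement for a non-empty $(k, \lambda, 2)$-GFirmCore is that $\lambda$ be expressible as $\sum_{i \in I} \w(\ell_i)$ for some $I \subseteq \{1, \dots, n\}$; the maximal-$\lambda$ convention forces this sum to be \emph{exact}. Thus querying the decomposition with $\lambda = t$ decides whether $t$ is an achievable subset sum of $(a_1, \dots, a_n)$. Since the construction is polynomial in $n + \log \max_i a_i$, a polynomial-time decomposition algorithm would yield a polynomial-time Subset Sum solver, contradicting NP-hardness unless $P = NP$.

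The main obstacle I anticipate is the bookkeeping for $\Delta_\ell$: for $p = 1$, clique removal yields only $\Delta = 1$, which would restrict the reduction to $k = 1$, and for $p < 0$ the expression requires a separate symmetric-gadget analysis. Fixing $p = 2$ sidesteps these cases, and the fully symmetric clique makes the maximal-$\lambda$ convention align automatically across the whole vertex set. A minor caveat is that subset sums are typically stated for integers, but scaling the $a_i$ up by a common denominator preserves both Subset Sum and the construction, so the reduction remains polynomial.
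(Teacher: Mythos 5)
Your proof is correct and takes essentially the same route as the paper: both arguments rest on the observation that, under the maximal-$\lambda$ (exact-sum) convention, deciding whether a non-empty $(k,\lambda,p)$-GFirmCore exists amounts to deciding whether $\lambda$ is an exact subset sum of the layer weights. Your version is in fact more complete than the paper's sketch --- you supply an explicit clique gadget (with the $\Delta_\ell = 2m-3$ computation for $p=2$) that neutralizes the degree condition, and you state the reduction in the correct direction (from Subset Sum to the GFirmCore decision problem), whereas the paper's one-line sketch phrases the reduction backwards.
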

\begin{proof}
    Here we provide the proof sketch for the sake of space constraint. Given a sequence of layer weights $w_1, w_2, \dots, w_{|L|}$,  the decision problem of whether there is a non-empty $(k, \lambda, p)$-GFirmCore can be simply reduced to the well-known NP-hard problem of the \emph{Subset Sum} over $w_1, w_2, \dots, w_{|L|}$, as its YES (resp. NO) instance means there is (resp. is not) a subset of $w_i$s with summation of~$\lambda$.
\end{proof}

\head{Algorithm}
Here, we design a polynomial-time algorithm that finds all $(k, \lambda, p)$-GFirmCores for given $\lambda$ and $p$. Given $\lambda$ and $p$, we define the GFirmCore index of a node $u$, Gcore$_\lambda(u)$, as the set of all $k \in \mathbb{N}$, such that $u$ is part of a $(k, \lambda, p)$-GFirmCore. For each node $u$ in subgraph $G[H]$, we consider a vector $\Psi(u)$ that its $\ell$-th element, $\Psi_\ell(u)$, shows $\Delta_\ell(H/\{u\})$'s in layer $\ell$. We further define Top-$\lambda(\Psi(u))$ as the maximum value of $k$ that there are some layers $\{\ell_1, \dots, \ell_t\}$ with a cumulative weight of at least $\lambda$ in which $\Delta_\ell(H/\{u\}) \geq k$. To calculate the Top-$\lambda(\Psi(u))$, we can simply sort the vector $\Psi(u)$ and check if the cumulative weights of layers in which $u$ has a $\Delta_\ell(H/\{u\})$ more than $k$ is $\geq \lambda$ or not. This process takes $\mathcal{O}(|L|\log |L|)$ time. It is easy to see that $u$ can be in at most $(k, \lambda, p)$-GFirmCore, where $k = $Top-$\lambda(\Psi(u))$. Accordingly, Algorithm \ref{alg:GFirmCore} processes the nodes in increasing order of Top$-\lambda(\Psi(u))$. It uses a vector $B$ of lists such that each element $i$ contains all nodes with Top$-\lambda(\Psi(u)) = i$. This technique allows us to keep vertices sorted throughout the algorithm and to update each element in $\mathcal{O}(1)$ time. Algorithm \ref{alg:GFirmCore} first initializes $B$ with Top$-\lambda(\Psi(u))$ and then starts processing $B$'s elements in increasing order. If a node $u$ is processed at iteration $k$, its Gcore$_\lambda$ is assigned to $k$ and removed from the graph. In order to remove a vertex from a graph, we need to update the degree of its neighbors in each layer, which leads to changing the Top$-\lambda(\Psi)$ of its neighbors and changing their bucket accordingly (lines 10-12). Note that it is simple to show that the above algorithm can find all $(k, \lambda, p)$-GFirmCores, given $\lambda$ and $p$. That is, at the end of $(k - 1)$-th iteration, each remaining nodes like $u$ has Top$-\lambda(\Psi(u)) \geq k$ as we removed all nodes with Top$-\lambda(\Psi)$ less than $k$ in the $(k-1)$-th iteration. 

\begin{algorithm}[t]
    \small
    \caption{Finding all $(k, \lambda, p)$-GFirmCores for a given $\lambda$}
    \label{alg:GFirmCore}
    \begin{algorithmic}[1]
        \Require{An ML graph $G = (V, E, L, \w)$, and a threshold $\lambda \in \mathbb{R}^{\geq 0}$}
        \Ensure{GFirmCore index Gcore$_\lambda(v)$ for each $v \in V$}
        \For{$v \in V$}
            \State $I[v] \leftarrow \text{Top-$\lambda$}(\Psi(v))$ 
            \State $B[I[v]] \leftarrow B[I[v]] \cup \{v\}$
        \EndFor
        \For{$k = 1, 2, \dots, |V|$}
            \While{$B[k] \neq \emptyset$}
                \State pick and remove $v$ from $B[k]$
                \State Gcore$_\lambda(v) \leftarrow k$, $N \leftarrow \emptyset$
                \For{$(v, u, \ell) \in E$ and $I[u] > k$}
                    \State update $\Psi_{\ell}(u)$ and remove $u$ from $B[I[u]]$
                    \State update $I[u]$  and $B[I[u]] \leftarrow B[I[u]] \cup \{u\}$
                \EndFor
                \State $V \leftarrow V \:\char`\\ \: \{v\}$
            \EndWhile
        \EndFor
    \end{algorithmic}
\end{algorithm}

\subsection{Approximation Algorithms}\label{sec:approx}
Algorithm~\ref{alg:ApproxGFirmCore} shows the pseudocode of the proposed approximation algorithm. Given a threshold $\alpha$, we first construct a candidate set for the value of $\lambda$. To this end, we consider the set of summations of all possible subsets of layer weights with size $1 \leq s \leq \alpha$, denoted as $\mathcal{M}$. Next, we use Algorithm~\ref{alg:GFirmCore} for each $\lambda \in \mathcal{M}$, and then report the densest GFirmCore as the approximate solution. In our experiments, we observe that always $\alpha = 10$ results in a good approximate solution. Given $p$, let  $S^*_{\text{SL}}$ be the $p$-mean densest subgraph among all single-layer densest subgraphs, and $\ell^*$  denote its layer. Let $C^*$ and $S^*$  denote our found approximation solution and the optimal solution, respectively. Finally, we use $\w^*$, $\w_{\min}$, and $\w_{\max}$ to refer to the summation of all layer weights, minimum weight, and maximum weight, respectively.

\begin{algorithm}[t]
    \small
    \caption{Approximation algorithm for the $p$-mean MDS}
    \label{alg:ApproxGFirmCore}
    \begin{algorithmic}[1]
        \Require{An ML graph $G = (V, E, L, \w)$,  a parameter $p \in \mathbb{R} \cup \{-\infty, \infty\}$}, and parameter $\alpha \in \{1, \dots, L \}$.
        \Ensure{Approximation solution to $p$-mean MDS.}
        \State $\mathcal{M} \leftarrow $ summations of all possible subsets of layer weights with size $1 \leq s \leq \alpha$;
        \For{$\lambda \in \mathcal{M}$}
        \State $\mathcal{Q}_{\lambda} \leftarrow $ find all $(k, \lambda, p)$-GFirmCore \Comment{Using Algorithm~\ref{alg:GFirmCore}}
        \State $\hat{C}_{\lambda} \leftarrow$ calculate the density and find the densest $(k, \lambda, p)$-GFirmCore $\in \mathcal{Q}_{\lambda} \:\: \rho()$.
        \EndFor
        \Return the densest subgraph among all $\hat{C}_{\lambda}$ for $\lambda \in \mathcal{M}$.
    \end{algorithmic}
\end{algorithm}

\begin{lemma}\label{density-firmcore}
Let $C$ be the $(k, \lambda, p)$-GFirmCore of $G$, we have:
\begin{align}\label{eq:density-firmcore0}
\rho(C) &\geq \frac{k^{1/p}}{{\w^{*}}^{1/p}} \times \underset{\tilde{L} \subseteq \hat{L}}{\max} \left\{ \left(\lambda - \sum_{i = 1}^{|\tilde{L}|} \w(\ell_i) \right)^{1/p}\hspace{-3ex} \times \underset{\ell \in \tilde{L}}{\max} ~\w(\ell)  \times \left(\sum_{\ell \in \tilde{L}} \w(\ell)\right)^\beta \right\}\\  \label{eq:density-firmcore}
&\geq \frac{k^{1/p} \times \w_{\min}}{{\w^{*}}^{1/p}} \times \max\left\{ \lambda^{1/p}, \lambda^{\beta/p} \right\},
\end{align}
where $\hat{L}$ is the first $|\hat{L}|$-th element in sorted $L$ with respect to the number of nodes like $u$ with $\Psi_{\ell}(u) \geq k$ for $\ell \in \hat{L}$, and $\w_{\min}$ is the smallest layer weights that contributed to $C$ (i.e., removing it changes either $k$ or $\lambda$). 

\end{lemma}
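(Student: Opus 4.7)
The plan is to translate the GFirmCore property of Definition~\ref{FirmCore} into pointwise lower bounds on $\Omega_\ell(C)$ over a collection of layers of total weight at least $\lambda$, and then substitute these bounds into the outer maximisation of~(\ref{eq1}) via a careful choice of the subset $\hat L$. I would first carry out the argument for $p>0$ in full; the cases $p<0$ and $p=\pm\infty$ then follow by symmetric reasoning, with the monotonicity of $x\mapsto x^p$ reversed or specialised to the max/min.

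The first step establishes pointwise estimates. For each $v\in C$, Definition~\ref{FirmCore} supplies a layer set $L_v$ of cumulative weight at least $\lambda$ on which $\Psi_\ell(v)=\Omega_\ell(C)^p-\Omega_\ell(C\setminus\{v\})^p\geq k$. Since $\Omega_\ell(C\setminus\{v\})^p\geq 0$, this already forces $\Omega_\ell(C)^p\geq k$ whenever even one such $v$ exists. Writing $m_\ell=|\{v\in C:\Psi_\ell(v)\geq k\}|$ and double-counting pairs $(v,\ell)$ with $\ell\in L_v$ yields $\sum_\ell\w(\ell)\,m_\ell\geq\lambda\,|C|$. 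A finer accounting expands $\Omega_\ell(C)^p-\Omega_\ell(C\setminus\{v\})^p$ in terms of $\deg^C_\ell(v)^p$ together with the neighbour-correction $\sum_{u\sim v}\bigl[\deg^C_\ell(u)^p-(\deg^C_\ell(u)-1)^p\bigr]$, and shows that each node counted by $m_\ell$ contributes enough to $\sum_u\deg^C_\ell(u)^p$ to give the refined estimate $\Omega_\ell(C)^p\gtrsim k\,m_\ell/|C|$ up to lower-order corrections.

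Next I sort the layers in decreasing order of $m_\ell$, recovering the sorted list $\hat L=(\ell_1,\ell_2,\dots)$ appearing in the lemma statement. For any prefix $\tilde L=\{\ell_1,\dots,\ell_t\}$ with $\sigma:=\sum_{i\leq t}\w(\ell_i)<\lambda$, bounding $m_\ell\leq|C|$ for $\ell\in\tilde L$ and $m_\ell\leq m_{\ell_t}$ for $\ell\notin\tilde L$ and rearranging the double-counting inequality gives $m_{\ell_t}\geq|C|(\lambda-\sigma)/(\w^{*}-\sigma)\geq|C|(\lambda-\sigma)/\w^{*}$. Composed with the refined pointwise estimate, every $\ell\in\tilde L$ satisfies $\Omega_\ell(C)\geq\bigl(k(\lambda-\sigma)/\w^{*}\bigr)^{1/p}$. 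Plugging this into~(\ref{eq1}) with $\hat L$ set equal to $\tilde L$, the inner minimum is at least $\bigl(k(\lambda-\sigma)/\w^{*}\bigr)^{1/p}\cdot\max_{\ell\in\tilde L}\w(\ell)$ (using the uniform $\Omega_\ell$ bound and relocating the minimum to the max-weight layer), while the $\beta$-factor contributes $(\sum_{\ell\in\tilde L}\w(\ell))^{\beta}$; maximising over prefixes $\tilde L\subseteq\hat L$ delivers~(\ref{eq:density-firmcore0}). The coarser inequality~(\ref{eq:density-firmcore}) then follows by evaluating at two extreme prefixes and taking the larger value: a minimal prefix at which $(\lambda-\sigma)^{1/p}\approx\lambda^{1/p}$ and the max-weight factor is at least $\w_{\min}$, yielding the $\lambda^{1/p}$ branch; and a prefix exhausting weight up to roughly $\lambda$, which makes $(\sum_{\ell\in\tilde L}\w(\ell))^{\beta}$ contribute $\lambda^{\beta}$ and, after absorbing the remaining factors into the outer $1/p$ power, yields the $\lambda^{\beta/p}$ branch.

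The step I expect to be hardest is the refinement $\Omega_\ell(C)^p\gtrsim k\,m_\ell/|C|$: the neighbour-correction term in the expansion of $\Psi_\ell(v)$ changes both sign and magnitude across the regimes $p\geq 1$, $0<p<1$, $p<0$, and $p=\pm\infty$, and requires separate handling in each regime. The two $p=\pm\infty$ endpoints reduce to the max/min degree within $C$, which is essentially the standard FirmCore setting and can be handled by direct adaptation of the argument in~\cite{FirmCore}.
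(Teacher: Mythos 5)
Your proposal follows essentially the same route as the paper's proof: a weighted pigeonhole/double-counting argument on $\sum_\ell \w(\ell)\,m_\ell \geq \lambda|C|$, iterated over the layers sorted by the count of nodes with $\Psi_\ell(u)\geq k$, with the prefix sums $\sigma$ producing the $(\lambda-\sum_i\w(\ell_i))^{1/p}$ factors in~(\ref{eq:density-firmcore0}) and the first and last prefixes yielding~(\ref{eq:density-firmcore}). The one place you are more careful than the paper is the conversion of the $\Delta$-based condition $\Psi_\ell(v)\geq k$ into the bound $\Omega_\ell(C)^p\gtrsim k\,m_\ell/|C|$ (the neighbour-correction issue you flag); the paper's proof silently treats each qualifying node as contributing $k$ to $\sum_u \deg_\ell(u)^p$ and does not address this.
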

\begin{proof}
By definition, each node $v \in C$ has at least $\Psi(u) \geq k$ in some layers with cumulative weights $\geq \lambda$, so based on the pigeonhole principle, there exists a layer $\ell'$ such that there are $\geq \frac{\lambda |C|}{\w^*}$ nodes like $u$ that each has $\Psi_{\ell'}(u) \geq k$. So we have:
$$\Omega_{\ell'}(|C|) \geq \w(\ell') \times \left( \frac{k \times \frac{\lambda |C|}{\w^*}}{|C|} \right)^{1/p} =  \w(\ell') \left( \frac{k \times \lambda}{\w^*}\right)^{1/p}.$$
Now, ignoring this layer, exploiting  the definition of $C$, and re-using the pigeonhole principle, we can conclude that there exists a layer $\ell''$ such that there are $\geq \frac{(\lambda - \w(\ell')) |C|}{\w^*}$ nodes like $u$ that each has $\Psi_{\ell''}(u) \geq k$. By iterating this process, we can simply conclude the Inequality~\ref{eq:density-firmcore}. Note that the last inequality is obtained from the first and last iterations of the above procedure.  
\end{proof}

\head{Case 1: $p \geq 1$}
Let $C^*_{SL}$ be the $(p+1)^{1/p}$ approx solution for $S^*_{SL}$ by~\cite{p-mean} (it exists when $p \geq 1$), and $\mu = \min \Delta_{\ell^*}(C^*_{SL})$. Since $C^*_{SL}$ is the optimal obtained solution, removing a node cannot increase its $p$-mean density (if increases, then we find a better approx solution as it is certainly produced in the algorithm). Therefore, it is simple to see that $\Omega_{\ell^*}(S^*_{SL})^p \leq \w(\ell^*)^p (p+1)\mu$. Based on the definition of $\mu$ and $\Delta$, there is a non-empty $(k^+, \lambda^+)$-GFirmCore that $k^+ \geq \mu$. So we have $k^+ \geq \frac{\Omega_{\ell^*}(S^*_{SL})^p}{\w(\ell^*)^p  (p+1)}$.

\begin{lemma}\label{lemma:3}
    $ \Omega_{\ell^*}(S^*_{SL}) {\w^*}^\beta \geq \rho(S^*)$.
\end{lemma}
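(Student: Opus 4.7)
The plan is to upper-bound $\rho(S^*)$ by peeling apart the max--min structure of its defining formula, and then to invoke the optimality of $S^*_{SL}$ over all single-layer subgraphs.

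First, I would let $\hat L^{\star}\subseteq L$ be a subset attaining the outer maximum in the definition of $\rho(S^*)$, and let $\bar\ell\in\hat L^{\star}$ be the layer realizing the inner minimum, so that
\[
\rho(S^*) \;=\; \Bigl(\min_{\ell\in\hat L^{\star}}\Xi_\ell(S^*)\Bigr)\Bigl(\sum_{\ell'\in\hat L^{\star}}\w(\ell')\Bigr)^{\!\beta} \;\le\; \Xi_{\bar\ell}(S^*)\,\Bigl(\sum_{\ell'\in\hat L^{\star}}\w(\ell')\Bigr)^{\!\beta}.
\]
For the weight-sum factor, I would use $\hat L^{\star}\subseteq L$ together with $\beta\ge 0$ and monotonicity of $x\mapsto x^\beta$ on $\mathbb{R}_{\ge 0}$ to obtain $\bigl(\sum_{\ell'\in\hat L^{\star}}\w(\ell')\bigr)^{\beta}\le(\w^*)^{\beta}$.

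Next, for the density factor, I would invoke the defining property of $S^*_{SL}$: by construction, $(\ell^*,S^*_{SL})$ maximizes the $p$-mean single-layer density among all choices of layer and subset, so taking $(\bar\ell,S^*)$ as a competitor gives $\Xi_{\bar\ell}(S^*)\le\Omega_{\ell^*}(S^*_{SL})$ in the single-layer density notion used to select $S^*_{SL}$. Chaining this inequality with the weight-sum bound yields
\[
\rho(S^*)\;\le\;\Omega_{\ell^*}(S^*_{SL})\,(\w^*)^{\beta},
\]
which is the claim.

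The argument is essentially immediate from the definitions, so I do not expect a serious obstacle. The only detail worth checking carefully is consistency of the single-layer density notion used for $S^*_{SL}$, i.e., whether the ``$p$-mean densest single-layer subgraph'' is measured by $\Omega_\ell$ or by the weighted variant $\Xi_\ell=\Omega_\ell\,\w(\ell)$, since the step dominating $\Xi_{\bar\ell}(S^*)$ by $\Omega_{\ell^*}(S^*_{SL})$ depends on this convention. Once the convention is fixed, the key step reduces to plugging $(\bar\ell,S^*)$ into the single-layer density maximization that defines $S^*_{SL}$, after which the remaining manipulations are purely monotonicity arguments on the weight sum.
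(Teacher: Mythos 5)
Your proof is correct and follows essentially the same route as the paper's one-line argument: bound the inner minimum of $\rho(S^*)$ by a single-layer density of $S^*$, invoke the optimality of $S^*_{SL}$ over all (layer, subset) pairs, and bound the subset weight sum by $\w^*$ using $\beta \geq 0$. The $\Xi_\ell$-versus-$\Omega_\ell$ ambiguity you flag is real, but it is present in the paper's own proof as well (which silently writes $\Omega_\ell(S^*)$ where the definition of $\rho$ has $\Xi_\ell(S^*)$), so your version is, if anything, the more careful of the two.
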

\begin{proof}
$\Omega_{\ell^*}(S^*_{SL}) {\w^*}^\beta \geq \max_{\ell \in L} \Omega_{\ell}(S^*) {\w^*}^\beta
        \geq \max_{\hat{L} \subseteq L} \min_{\ell \in \hat{L}} \Omega_{\ell}(S^*) \left(\sum_{\ell' \in \hat{L}} \w(\ell')\right)^\beta.$
\end{proof}

\begin{theorem}[Approximation Algorithm for $p\geq 1$]\label{thm:p-approx}
    \begin{equation}
        \rho(C^*) \geq \frac{1}{(p+1)^{1/p}} \times \frac{\w_{\min} \times \max\left\{ {\lambda^+}^{1/p}, {\lambda^+}^{\beta/p} \right\}}{\w_{\max} {\w^{*}}^{\beta + 1/p}}  \times \rho(S^*),
    \end{equation}
\end{theorem}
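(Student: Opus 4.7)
The plan is to chain three ingredients already in hand: Lemma~\ref{density-firmcore}, Lemma~\ref{lemma:3}, and the single-layer $(p+1)^{1/p}$-approximation of~\cite{p-mean} applied to layer $\ell^*$. First, I would note that the enumeration set $\mathcal{M}$ always contains the value $\lambda^+ := \w(\ell^*)$, since any singleton $\{\ell^*\}$ has size $1 \leq \alpha$. Therefore, when Algorithm~\ref{alg:ApproxGFirmCore} iterates with this $\lambda$, Algorithm~\ref{alg:GFirmCore} discovers every non-empty $(k, \lambda^+, p)$-GFirmCore. The inequality $k^+ \geq \mu \geq \Omega_{\ell^*}(S^*_{SL})^p / (\w(\ell^*)^p (p+1))$ stated just before Lemma~\ref{lemma:3} certifies the existence of a witness non-empty GFirmCore with a useful lower bound on $k^+$.

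Second, I would apply Lemma~\ref{density-firmcore} to this witness GFirmCore to get
$$\rho(\hat{C}_{\lambda^+}) \;\geq\; \frac{{k^+}^{1/p}\,\w_{\min}}{{\w^{*}}^{1/p}}\,\max\bigl\{{\lambda^+}^{1/p},{\lambda^+}^{\beta/p}\bigr\}.$$
Because $C^*$ is defined as the densest GFirmCore returned across all $\lambda \in \mathcal{M}$, we have $\rho(C^*) \geq \rho(\hat{C}_{\lambda^+})$. Substituting ${k^+}^{1/p} \geq \Omega_{\ell^*}(S^*_{SL}) / (\w(\ell^*)\,(p+1)^{1/p})$, bounding $\w(\ell^*) \leq \w_{\max}$, and invoking Lemma~\ref{lemma:3} to replace $\Omega_{\ell^*}(S^*_{SL}) \geq \rho(S^*)/{\w^{*}}^{\beta}$, a routine rearrangement produces exactly the stated ratio.

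The main obstacle is not the algebra but the conceptual bridge between the single-layer $(p+1)^{1/p}$-approximation and the \emph{existence} of a non-empty $(k^+, \lambda^+, p)$-GFirmCore with $k^+ \geq \mu$. One must verify via Definition~\ref{FirmCore} and Proposition~\ref{prop:FirmCore-hierarchical} that every node of $C^*_{SL}$ already satisfies $\Delta_{\ell^*}(C^*_{SL}/\{u\}) \geq \mu$ in the single layer $\ell^*$ of cumulative weight $\lambda^+ = \w(\ell^*)$, so the peeling of Algorithm~\ref{alg:GFirmCore} at level $\lambda^+$ cannot eliminate $C^*_{SL}$ before iteration $\mu$. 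A secondary subtlety worth recording is why pinning $\lambda^+$ to a single layer weight (rather than a more exotic subset sum) is crucial: only for such $\lambda^+$ is membership in $\mathcal{M}$ guaranteed for every $\alpha \geq 1$, and only then does the cited single-layer approximation give a meaningful bound on $\mu$.
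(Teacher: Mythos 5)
Your proposal is correct and follows essentially the same route as the paper, which proves the theorem by combining Lemma~\ref{density-firmcore} (applied with $\lambda^+ = \w(\ell^*)$ and the witness core with $k^+ \geq \mu \geq \Omega_{\ell^*}(S^*_{SL})^p/(\w(\ell^*)^p(p+1))$), Lemma~\ref{lemma:3}, and the bound $\w(\ell^*) \leq \w_{\max}$. The paper states this chain only as a one-line sketch; your write-up fills in the same steps, including the observation that $\alpha = 1$ already guarantees $\lambda^+ \in \mathcal{M}$, which the paper notes separately after Theorem~\ref{thm:p-approx}.
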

\begin{proof}
    The proof of this theorem is based on Lemmas~\ref{density-firmcore} and \ref{lemma:3}, and the fact that $k^+ \geq \frac{\Omega_{\ell^*}(S^*_{SL})^p}{\w(\ell^*)^p  (p+1)}$.
\end{proof}
\noindent
Note that for the sake of simplicity, in the above theorem, we used Inequality~\ref{eq:density-firmcore}. For a tighter bound, one can use Inequality~\ref{eq:density-firmcore0} in Lemma~\ref{density-firmcore}. When $p = 1$ and $\w(.) = 1$, the approximation guarantee matches the approximation guarantee by \citet{FirmCore}, which is the best existing guarantee for this special case. Note that, our work is the first algorithm for the generalized $p$-mean MDS case.

\head{Case 2: $p \in [-\infty, 1]$} In this part, we show that our approx solution to $1$-mean MDS, can provide an approximation solution to $p$-mean MDS, when $p \in [-\infty, 1]$.

\begin{theorem}[Approximation Algorithm for $-\infty \leq p \leq 1$]\label{thm:2-approx}
    \begin{equation}
        \rho(C^*) \geq \frac{1}{(p+1)^{1/p}} \times \frac{\w_{\min} \times \max\left\{ {\lambda^+}^{1/p}, {\lambda^+}^{\beta/p} \right\}}{2 \times \w_{\max} {\w^{*}}^{\beta + 1/p}}  \times \rho(S^*),
    \end{equation}
\end{theorem}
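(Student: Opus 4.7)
The plan is to reduce the general $p \in [-\infty,1]$ case to the arithmetic-mean case ($p=1$) via the power-mean inequality, and then invoke the machinery already built for Case~1. Concretely, the approximation we analyze is Algorithm~\ref{alg:ApproxGFirmCore} run with $p=1$: this falls under the Case~1 regime, so the classical greedy peeling of~\cite{p-mean}, which achieves ratio $(p+1)^{1/p}=2$ at $p=1$, gives a $2$-approximation for the single-layer $1$-mean densest subgraph. The chain of inequalities in the proof of Theorem~\ref{thm:p-approx} then produces a $(k^+, \lambda^+, 1)$-GFirmCore $C^*$ satisfying $k^+ \geq \Omega^{(1)}_{\ell^*}(S^*_{SL,1})/(2\,\w(\ell^*))$, where $S^*_{SL,1}$ denotes the optimal single-layer $1$-mean densest subgraph.

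To translate this into a guarantee for the $p$-mean MDS with $p \leq 1$, I would invoke the power-mean inequality, $\Omega^{(p)}_\ell(S) \leq \Omega^{(1)}_\ell(S)$, valid for every $S$, every $\ell$, and every $p \leq 1$. Taking maxima over layers and over single-layer densest subgraphs yields $\Omega^{(p)}_{\ell^*}(S^*_{SL,p}) \leq \Omega^{(1)}_{\ell^*}(S^*_{SL,1})$, so $k^+ \geq \Omega^{(p)}_{\ell^*}(S^*_{SL,p})/(2\,\w(\ell^*))$. Combining this with the $p$-mean analog of Lemma~\ref{lemma:3} (i.e.\ $\rho(S^*) \leq \Omega^{(p)}_{\ell^*}(S^*_{SL,p})\cdot{\w^*}^\beta$, which is proven by exactly the same single-max-over-layers argument), and substituting into Lemma~\ref{density-firmcore} applied to $C^*$ with the $p$-mean density, produces the stated ratio. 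The extra factor of $1/2$ relative to Case~1 is exactly the approximation ratio of the classical greedy peeling on the intermediate $1$-mean problem, while the symbolic $(p+1)^{1/p}$ is carried over from the Case~1 formula for uniformity.

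The main obstacle I expect is the pigeonhole step inside Lemma~\ref{density-firmcore} when $p<0$: the $p$-mean of degrees is then dominated by vertices of low degree in the chosen layer, but the $(k^+, \lambda^+, 1)$-GFirmCore property only guarantees high $\Delta^{(1)}$ for a pigeonhole fraction $\lambda^+/\w^*$ of vertices in any one layer. To bridge this, I would pass to the sub-subgraph of $C^*$ induced by those high-degree vertices in some layer $\ell'$; this sub-subgraph is itself a $(k^+, \w(\ell'), p)$-GFirmCore and is therefore among the candidates considered by Algorithm~\ref{alg:ApproxGFirmCore} for $\lambda=\w(\ell')\in\mathcal{M}$, so the algorithm's output is at least as dense. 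Verifying this refinement and carrying out the routine bookkeeping of the $\w_{\min}$, $\w_{\max}$, and ${\w^*}^{\beta+1/p}$ factors then completes the argument.
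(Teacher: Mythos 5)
Your proposal follows essentially the same route as the paper: both reduce the $p \leq 1$ case to the $p=1$ case via the monotonicity of the power mean in $p$, extract the extra factor of $2$ from the min-degree/peeling property of the single-layer $1$-mean densest subgraph, and then invoke Lemma~\ref{lemma:3} together with the Case~1 machinery of Theorem~\ref{thm:p-approx} and Lemma~\ref{density-firmcore}. If anything you are more careful than the paper, whose proof is a two-line sketch that silently applies the Case~1 bound without addressing the mismatch you flag between a degree-based ($p=1$) GFirmCore certificate and the $p$-mean pigeonhole step of Lemma~\ref{density-firmcore} when $p<0$.
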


\begin{proof}
     Let $S^{*^{(1)}}_{\text{SL}}$ be the optimal solution of $ \Omega_{\ell^*}(S^*_{\text{SL}})$ when $p = 1$. We know that $\min_{u \in S^{*^{(1)}}_{\text{SL}}} deg_{\ell^*}(u) \geq \frac{1}{2} \Omega_{\ell^*}(S^{*^{(1)}}_{\text{SL}}) = \frac{1}{2} \Omega^{(1)}_{\ell^*}(S^{*^{(1)}}_{\text{SL}})$ \underline{for $p = 1$}, since removing the node with the minimum degree cannot increase the density. On the other hand, as discussed by \citet{p-mean2}, $p$-mean function over the degree of nodes in a graph is monotone. Therefore, we have:
    \begin{align}
         \Omega_{\ell^*}(S^{*^{(1)}}_{\text{SL}}) \geq \min_{u \in S^*_{\text{SL}}} deg_{\ell^*}(u) \geq  \frac{1}{2} \Omega^{(1)}_{\ell^*}(S^*_{\text{SL}}) \geq \frac{1}{2} \Omega_{\ell^*}(S^*_{\text{SL}})
    \end{align}
    The last inequality comes from the monotonicity of $p$-mean function over the degree of nodes in a graph. Using Lemma~\ref{lemma:3} and Theorem~\ref{thm:p-approx}, we can simply show the above approximation guarantee. 
\end{proof}

\noindent
Note that, while empirically the value of $\alpha$ can affect the performance, theoretically its value cannot affect the approx guarantee as we only need $\alpha = 1$.

\begin{table*}[t!]
	\caption{
    Comparison of the solutions found by GFirmCore and the state-of-the-art FirmCore~\cite{FirmCore}. The superior performance of GFirmCore with different $p$ shows the importance of considering weights for different relation types.  
	}
	\label{tab:results}
	\centering
	\scalebox{0.67}{
 \hspace{-6ex}
 \begin{tabular}{l l l   |  l  | l l     l l l  l l l l l}
			\toprule
		&&   &\textbf{Dataset}&\textbf{Homo} & \textbf{Sacchcere} & \textbf{FAO  ~~~~ } & \textbf{Brain ~~~~}& \textbf{DBLP} & \textbf{Amazon} & \textbf{FFTwitter} & \textbf{Friendfeed} & \textbf{StackO} & \textbf{Google+}  \\
		&&   & $|V|$ & 18k & 6.5k & 214 & 190 & 513k & 410k & 155k & 510k & 2.6M & 28.9M\\
		&&   & $|E|$ & 153k & 247k & 319K & 934K & 1.0M & 8.1M & 13M & 18M & 47.9M & 1.19B\\
            && \textbf{Metric} & $|L|$ & 7 & 7 & 364 & 520 & 10 & 4 & 2 & 3 & 24 & 4 \\
			\midrule
            \midrule
			\multirow{13}{*}{\rotatebox[origin=c]{90}{GFirmCore} ~~} & & Edge  & $p = -\infty$ & 0.73 & 0.68& 0.45& 1.00 & 0.52 & 0.48& 0.74& 0.39 & 0.50 & 0.98\\
			&& Density& $p = - 1$& 0.73 & 0.49& 0.47& 1.00 & 0.39 & 0.48& 0.59& 0.36 & 0.53 & 0.56\\
			&& \multirow{2}{*}{$\frac{\sum_{\ell \in L}\w_\ell|E_\ell[S]|}{\w^* \times {\binom{|S|}{2}}}$} & $p = 0$ & 0.39 & 0.55 & 0.39& 0.92& 0.39 & 0.33 & 0.59& 0.78 & 0.46 & 0.73 \\
			&& & $p = 1$ &  0.58& 0.46& 0.47 & 0.90 & 0.39& 0.51& 0.59 & 0.48 & 0.53 & 0.84\\
			\cmidrule{2-14}
            \cmidrule{2-14}
			&& Multiplex  & $p = -\infty$ & 28.36 & 20.79& 1553.84 & 3941.55 & 77.46 & 41.89 & 111.42& 163.58 & 96.20 & 153.99\\
			&& Density~\cite{MLcore}& $p = - 1$& 30.17 & 19.53& 1559.25& 3941.55 & 81.17 & 42.01& 98.50 & 165.72  & 97.18 & 172.87\\
			& & & $p = 0$ &  28.49 & 31.26& 1674.41& 7180.09 & 82.46 & 40.51& 98.73& 183.76 & 99.03 & 148.16\\
			&& & $p = 1$& 31.14 & 28.59 & 1854.07 & 7935.29&  82.91 & 61.38 & 99.26& 216.74 & 118.33 & 173.81\\
						\cmidrule{2-14}
                        \cmidrule{2-14}
			&& Runtime ($s$) & $p = -\infty$ &  38 & 96& 7199& 9207 & 930 & 992& 894& 4375 & 23698 & 71148\\
            && & $p = - 1$& 43 &  101& 7418& 9491 & 1061 & 1206& 1089& 4810 & 26056 & 74703\\
			&& & $p = 0 $ & 39  & 113& 7407& 9462 & 1128 & 1135& 1103& 4729 & 26114 & 74669\\
			&& & $p = 1$& 48 &  105& 7369 & 9503 & 1076 & 1160& 1057 & 4788 & 25671 & 74893\\
			\bottomrule
            \bottomrule
            \multirow{13}{*}{\rotatebox[origin=c]{90}{FirmCore}} & & Edge  & $p = -\infty$ & 0.69 & 0.61 & 0.45 & 0.92 & 0.44 & 0.37 & 0.60 & 0.42  & 0.46  & 0.74\\
			&& Density& $p = - 1$& 0.58 & 0.61 & 0.45 & 0.92 & 0.35 & 0.33 & 0.52 & 0.38  & 0.49  & 0.70\\
			&& \multirow{2}{*}{$\frac{\sum_{\ell \in L}\w_\ell|E_\ell[S]|}{\w^* \times {\binom{|S|}{2}}}$} & $p = 0$ & 0.32 & 0.61 & 0.39 & 0.92 & 0.35 & 0.31 &   0.52  & 0.36 & 0.41 & 0.52 \\
			&& & $p = 1$ & 0.47 & 0.42 & 0.35 & 0.78 & 0.41 & 0.42 & 0.52 & 0.36  & 0.45  & 0.52\\
			\cmidrule{2-14}
            \cmidrule{2-14}
			&& Multiplex  & $p = -\infty$ & 27.85 & 22.91 & 1553.84 & 6997.12 & 75.19 & 39.28 & 98.46 & 167.19  & 98.51  & 162.43\\
			&& Density~\cite{MLcore}& $p = - 1$& 28.14 & 23.69 & 1598.66 & 7034.50 & 75.83 & 39.15 & 98.03 & 167.56  & 100.03  & 163.88\\
			&& & $p = 0$ & 28.53 & 25.82 & 1659.41 & 7180.09 & 76.11 & 39.64 & 99.12 & 168.44  & 100.98  & 162.07\\
			&& & $p = 1$ & 29.74 & 25.87 & 1673.18 & 7163.89 & 78.91 & 43.52 & 100.24 & 170.87  & 107.09  & 164.81\\
						\cmidrule{2-14}
                        \cmidrule{2-14}
			&& Runtime ($s$) & $p = -\infty$ & 19 & 36 & 2403 & 3169 & 322 & 348 & 297 & 799  & 6951  & 34814\\
            && & $p = - 1$& 21 & 37 & 2964 & 3613 & 438 & 489 & 386 & 841  & 8116 & 35726 \\
			&& & $p = 0 $ & 20 & 46 & 2954 & 3486 & 447 & 467 & 394 & 835  & 8170  & 35482 \\
			&& & $p = 1$& 20 & 41 & 2454 & 3273 & 362 & 394 & 359 & 891  & 8053  & 36027 \\
   \bottomrule
	\end{tabular}
 }\vspace{-3ex}
\end{table*} 

\section{Experiments}\label{sec:experiments}

\head{Setup}
 Designed algorithms and baselines are implemented in Python (compiled by Cython). All experiments are performed on a Linux machine with Intel Xeon 2.6 GHz CPU and 128 GB RAM.

\head{Datasets}
In our experiments, we use 10 real-world datasets~\cite{FirmCore, FirmTruss, CS-MLGCN, MLcore, Friendfeed, Higgs, amazon_datset, Google+, FAO, homo, Twitter_datasets} whose domains cover social, genetic, co-authorship, financial, and co-purchasing networks. The main characteristics are summarized in \autoref{tab:results}. We use an unsupervised learning method to learn the importance of each layer~\cite{anomuly} and treat them as layer~weights.

\head{Results}
Table~\ref{tab:results} reports the average edge density and multiplex density for different values of $p$. Based on these results, our definition of density can find different and meaningful dense structures. Also, it is notable that the effect of $p$ on the performance depends on the datasets, which again shows the importance of the flexibility that our formulation can provide. GFirmCore in all datasets finds a densest structure that is denser than the found solution by FirmCore, which shows the significance of considering weights for different layers. 

\noindent
Since there is no algorithm for exactly finding the multiplex densest subgraph, we generate two synthetic datasets, S1 and S2, both with $|V| = 100$, $|E| = 10000$, $|L| = 4$. We use the same approach as real-world datasets to obtain layer weights. We also inject the densest subgraph via clique density to S1 and average degree density to S2. Figure~\ref{fig:approx} reports the ratio of the found solution and the optimal solution obtained by our algorithms ($p = 1, 2, 3$) and baselines FirmCore~\cite{FirmCore} and ML $\mathbf{k}$-core~\cite{MLcore}. Our algorithms outperform both baselines in both datasets and all values of $p$ including $p = 1$, which they are designed for. This result shows the importance of handling different importance for~different~layers.

\noindent
Figure~\ref{fig:time} shows the running time of our algorithms and baselines. While our algorithms are much faster than ML $\mathbf{k}$-core~\cite{MLcore}, FirmCore is more efficient than our algorithms. The main reason is that FirmCore does not consider different weights and as we discussed in \S\ref{sec:method}, this relaxation can change the complexity of the decomposition (GFirmCore is NP-hard while FirmCore is polynomial). It is notable that our algorithms are scalable to graphs with~billions~of~edges.

\begin{figure}
\centering
\hspace{-6ex}
\begin{minipage}{.48\textwidth}
  \centering
    \includegraphics[width=1.1\linewidth]{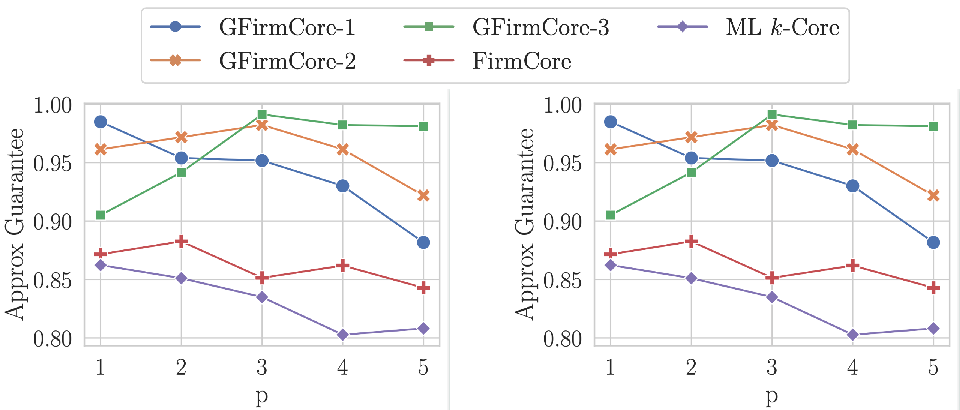}
    \vspace{
    -3ex
    }
  \caption{The quality of found solution by GFirmCore and baselines. (Left) S1, (Right) S2 datasets.}
 \label{fig:approx}
\end{minipage}
\hspace{7ex}
\begin{minipage}{.48\textwidth}
  \centering
    \includegraphics[width=1.1\linewidth]{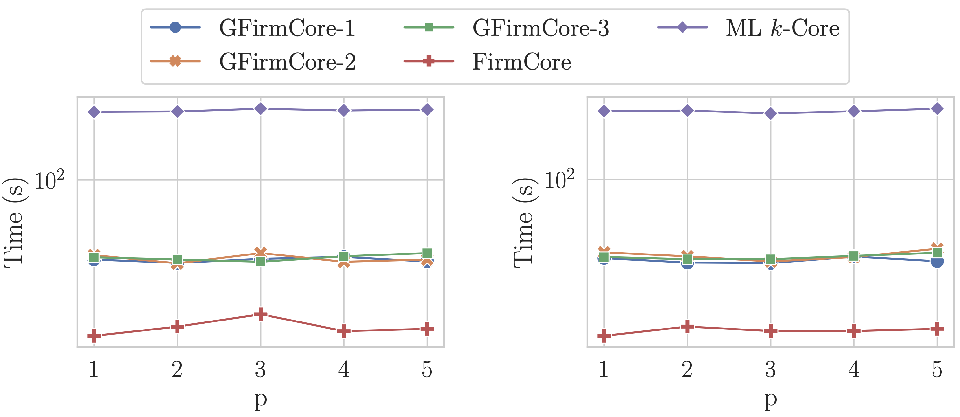}
    \vspace{
    -3ex
    }
  \caption{The running time of GFirmCore and baselines. (Left) S1, (Right) S2 datasets.}
  \label{fig:time}
\end{minipage}
\vspace{-2ex}
\end{figure}

\begin{figure}
    \centering
    \includegraphics[width=0.90\linewidth]{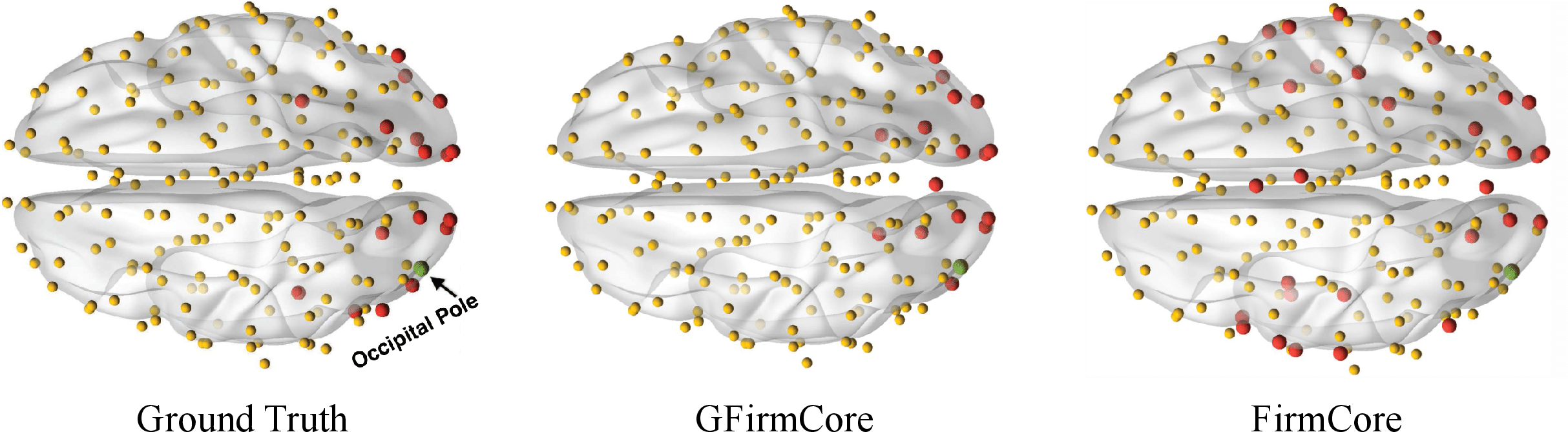}
    \vspace{
    -2ex
    }
    \caption{The running time of GFirmCore and baselines. (Left) S1, (Right) S2 datasets.}
    \label{fig:case-study}
    \vspace{-2ex}
\end{figure}

\head{Case study: Brain Networks}
Detecting and monitoring functional systems in the human brain is a primary task in neuroscience. Brain Networks obtained from fMRI, are graph representations of the brain, where each node is a brain region and two nodes are connected if there is a high correlation between their functionality. However, the brain network generated from an individual can be noisy and incomplete. Using brain networks from many individuals can help to identify functional systems more accurately. A dense subgraph in a multiplex brain network, where each layer is the brain network of an individual, can be interpreted as a functional system in the brain. Figure~\ref{fig:case-study} shows the densest subgraph including the occipital pole found by FirmCore and GFirmCore as well as the ground-truth functional system of the occipital pole (i.e., visual processing). The densest subgraph found by GFirmCore is more similar to ground truth than FirmCore. The main reason is that the brain network generated from an individual can be noisy/incomplete and FirmCore treats all layers the same.

\vspace{-2ex}
\section{Conclusion}
In this paper, we propose and study a novel extended notion of core in layer-weighted multiplex networks, GFirmCore, where each layer has a weight that indicates the importance/significance of the layer. We show that theoretically this problem is more challenging than its layer-unweighted counterpart and is NP-hard. We further extend the notion of multiplex density to layer-weighted multiplex networks. For the sake of unifying existing density measures, we propose a new family of densest subgraph objectives, parameterized by a single parameter $p$ that controls the importance of larger/smaller degrees in the subgraph. Using our GFirmCore, we propose the first polynomial approximation algorithm that provides approximation guarantee in the general case of $p$-mean densest subgraph problem. Our experimental results, show the efficiency and effectiveness of our algorithms and the significance of considering different weights for the layers in multiplex networks.









\bibliographystyle{plainnat} 
\bibliography{main} 
\end{document}